\renewcommand{\epsilon}{\varepsilon}
\newtheorem*{claim*}{Claim}
\title{There is no APTAS for 2-Dimensional Vector Bin Packing: Revisited}
\author{Arka Ray\\ Indian Institute of Science, Bengaluru\\ arkaray@iisc.ac.in}
\date{}
\newcommand{\opt}{\texttt{OPT}}
\newtheorem{theorem}{Theorem}[section]
\newtheorem{lemma}[theorem]{Lemma}
\newtheorem{observation}[theorem]{Observation}
\begin{document}
	
	\maketitle
 \begin{abstract}
    We study the Vector Bin Packing and the Vector Bin Covering problems, multidimensional generalizations of the Bin Packing and the Bin Covering problems, respectively.
In the Vector Bin Packing, we are given a set of $d$-dimensional vectors from $[0,1]^d$ and the aim is to partition the set into the minimum number of bins such that for each bin $B$, each component of the sum of the vectors in $B$ is at most 1.
Woeginger \cite{DBLP:journals/ipl/Woeginger97} claimed that the problem has no APTAS for dimensions greater than or equal to 2.
We note that there was a slight oversight in the original proof.
In this work, we give a revised proof using some additional ideas from \cite{DBLP:journals/mor/BansalCKS06,DBLP:journals/jda/ChlebikC09}.
In fact, we show that it is NP-hard to get an asymptotic approximation ratio better than $\frac{600}{599}$.

An instance of Vector Bin Packing is called $\delta$-skewed if every item has at most one dimension greater than $\delta$.
As a natural extension of our general $d$-Dimensional Vector Bin Packing result we show that for $\epsilon\in (0,\frac{1}{2500})$ it is NP-hard to obtain a $(1+\epsilon)$-approximation for $\delta$-Skewed Vector Bin Packing if $\delta>20\sqrt \epsilon$.

In the Vector Bin Covering problem given a set of $d$-dimensional vectors from $[0,1]^d$, the aim is to obtain a family of disjoint subsets (called bins) with the maximum cardinality such that for each bin $B$, each component of the sum of the vectors in $B$ is at least 1.
Using ideas similar to our Vector Bin Packing result, we show that for Vector Bin Covering there is no APTAS for dimensions greater than or equal to 2.
In fact, we show that it is NP-hard to get an asymptotic approximation ratio better than $\frac{998}{997}$.
	
\end{abstract}

	\section{Introduction}
In the $d$-Dimensional Vector Bin Packing problem, we are given a set of $d$-dimensional vectors (say $S$), each of whose components belongs to $[0,1]$, i.e., $S\subseteq [0,1]^d$.
The aim is to partition $S$ into the minimum number of bins, such that for each bin $B$, each component of the sum of the vectors in $B$ is at most 1.
Whenever the above condition holds for a set of vectors $B$, we say that the vectors in $B$ fit in a bin.
The Vector Bin Packing problem is a natural generalization of the Bin Packing problem, which can be obtained by setting $d=1$.

We also study a related problem called the Vector Bin Covering problem, a generalization of the Bin Covering problem.
In the $d$-Dimensional Vector Bin Covering problem, we are again given a set of $d$-dimensional vectors (say $S$), each of whose components belongs to $[0,1]$.
The aim is to obtain a family of disjoint subsets (these subsets are called bins) with the maximum cardinality such that for each bin $B$, each component of the sum of the vectors in $B$ is at least 1.
	
There is a well-known reduction from the Partition problem\footnote{In the Partition problem given a list of positive integers $x_1,\dots,x_n$, the aim is to determine whether there is a subset $S\subseteq \{1,\dots,n\}$ such that $\sum_{i\in S}x_i=\sum_{i\not \in S}x_i$.} to the Bin Packing problem, showing that it is NP-hard to obtain a $3/2$ absolute approximation for Bin Packing.
In fact, the same reduction shows that it is NP-hard to get an absolute approximation ratio of $2-\epsilon$ for Bin Covering.
Hence, we will look at the asymptotic approximation ratio for both of these problems. 

Finally, we also study the Vector Bin Packing problem restricted to skewed instances, i.e., instances where all items are $\delta$-skewed for some constant $\delta>0$.
An item is called $\delta$-\emph{large} if at least two dimensions are larger than $\delta$, for some constant $\delta>0$; otherwise, it is called $\delta$-\emph{skewed}.
In fact, the case where all items are skewed forms an important subcase for many packing problems~\cite{DBLP:conf/soda/BansalE016, DBLP:conf/approx/Galvez0AJ0R20, DBLP:conf/approx/0001S21}.

\subsection{Related Works}
For the Vector Bin Packing problem, when $d$ has been supplied as part of the input, Fernandez de la Vega and Lueker gave a $(d+\epsilon)$-approximate algorithm in \cite{DBLP:journals/combinatorica/VegaL81}.
This algorithm is almost optimal as there is a well-known reduction from the Vertex Coloring problem, which shows a $d^{1-\epsilon}$ hardness (see~\cite{DBLP:conf/soda/BansalE016}).
	
If $d$ is kept fixed, i.e., it is not supplied as part of the input, then the above lower bound does not hold, and in fact, much better approximation factors are known for this case.
The barrier of $d$ was broken by Chekuri and Khanna~\cite{DBLP:journals/siamcomp/ChekuriK04} by obtaining a $\ln d + 2 + \gamma$ approximation, where $\gamma\approx 0.57721$ is the Euler-Mascheroni constant.
This was further improved to $\ln d + 1$ by Bansal, Caprara, and Sviridenko~\cite{DBLP:journals/siamcomp/BansalCS09} and then to $\ln (d+1) + 0.807$ by Bansal, Eli\'a\v{s}, and Khan \cite{DBLP:conf/soda/BansalE016}.
Recently, Sandeep \cite{DBLP:conf/focs/Sandeep21} showed that the best approximation ratio any algorithm which solves Vector Bin Packing can have, for high enough dimensions, is $\Omega (\ln d)$.
In the $d=2$ case, Bansal, Caprara, and Sviridenko \cite{DBLP:journals/siamcomp/BansalCS09} gave a 1.693 approximation algorithm which was later improved to 1.406 by Bansal, Eli\'a\v{s}, and Khan \cite{DBLP:conf/soda/BansalE016}.
Recently, this was further improved to $\frac{4}{3}+\varepsilon$ by Kulik, Mnich, and Shachnai~\cite{DBLP:journals/corr/abs-2205-12828}.

Bansal, Eli\'a\v{s}, and Khan \cite{DBLP:conf/soda/BansalE016} also note that skewed instances 
constitute the hard instances for rounding-based algorithms for 2-Dimensional Vector Bin Packing.
G\'alvez, Grandoni, Ameli, Jansen, Khan, and Rau~\cite{DBLP:conf/approx/Galvez0AJ0R20} studied the Strip Packing problem in this context and gave a $(\frac{3}{2}+\epsilon)$-approximation. They also showed an (almost) matching $(\frac{3}{2}-\epsilon)$ lower bound.
Recently, Khan and Sharma \cite{DBLP:conf/approx/0001S21} gave an APTAS for 2-Dimensional Geometric Bin Packing with skewed items.
They also note that it is possible to solve the Maximum Independent Set of Rectangles problem and the 2-Dimensional Geometric Knapsack problem exactly in polynomial time if all items are $\delta$-large.
	
For the Vector Bin Covering problem, when $d$ is supplied as part of the input, the algorithm in the work by Alon, Azar, Csirik, Epstein, Sevastianov, Vestjens, and Woeginger \cite{DBLP:journals/algorithmica/AlonACESVW98} gives an approximation ratio of $O(\ln d)$.
Finally, Sandeep \cite{DBLP:conf/focs/Sandeep21} also gave a lower bound of $\Omega(\frac{\log d}{\log \log d})$ when the run-time is allowed to have a superpolynomial dependence on $d$. 
	
For further information on approximation and online algorithms for multidimensional variants of the Bin Packing and Bin Covering problems, we refer the reader to the survey \cite{DBLP:journals/csr/ChristensenKPT17} by Christensen, Khan, Pokutta, and Tetali.

\subsection{Our Results}
It was believed that \cite{DBLP:journals/ipl/Woeginger97} showed that there is no APTAS for the $d$-Dimensional Vector Bin Packing problem with $d\geq 2$.
However, as we show in \Cref{sec:original}, there was a minor oversight in the original proof.
Unfortunately, this oversight is also present in the $\frac{391}{390}$ lower bound for Vector Bin Packing by Chleb\'ik and Chlebikov\'a \cite{DBLP:journals/eccc/ECCC-TR06-019}.
Hence, we present a revised proof in \Cref{sec:revised}.
Our proof uses essentially the same construction as the original proof.
However, the analysis is slightly different and the main ideas for the analysis are borrowed from \cite{DBLP:journals/mor/BansalCKS06,DBLP:journals/jda/ChlebikC09}.
We note that Sandeep's lower bound of $\Omega(\ln d)$ does not hold for low dimensions, and hence, it does not even rule out the possibility of APTAS in the 2-dimensional case.
	
Our second result concerns Vector Bin Packing with skewed items.
Extending the proof of our non-existence of APTAS for Vector Bin Packing we show that for $\epsilon\in (0,\frac{1}{2500})$ we need $\delta\leq 20\sqrt \epsilon$ to obtain a $(1+\epsilon)$-approximation for $\delta$-Skewed $d$-Dimensional Vector Bin Packing.
Finally, we also show that there is no APTAS for Vector Bin Covering with dimension $d\geq 2$.

\subsection{Preliminaries}
As is the case in \cite{DBLP:journals/ipl/Woeginger97} and \cite{DBLP:journals/mor/BansalCKS06}, we start with Maximum 3-Dimensional Matching (denoted by MAX-3-DM) and reduce it to 4-Partition and then reduce it to Vector Bin Packing.
A 3-Dimensional Matching instance has three sets $X=\{x_1,x_2,\dots,x_q\}$,$Y=\{y_1,y_2,\dots,y_q\}$, and $Z=\{z_1,z_2,\dots,z_q\}$ and a set of tuples $T\subseteq X\times Y\times Z$.
In MAX-3-DM given such an instance the aim is to find a subset $T'\subseteq T$ with the maximum cardinality such that no element from $X,Y,$ or $Z$ occurs in more than one tuple.
For our reduction we consider a restricted variant of this problem where there are exactly 2 tuples containing each element of the sets $X$,$Y$, and $Z$.
This variant is called the 2-Exact Maximum 3-Dimensional Matching problem (denoted by MAX-3-DM-E2).
More precisely, we consider the gap variant of MAX-3-DM-E2 (denoted GAP($\alpha, \beta$)-3-DM-E2),
where given a MAX-3-DM-E2 instance $I_M$ the aim is to distinguish between the case with $\opt(I_M)\geq \lceil \beta q \rceil$ and $\opt(I_M)\leq \lfloor \alpha q \rfloor$, where $\opt(I_M)$ is the optimal solution to the corresponding MAX-3-DM-E2 problem.
In the $m$-Partition problem given a list of integers $x_1,\dots,x_n$ with $n$ being a multiple of $m$, the aim is to determine whether there exists $n/m$ disjoint subsets $S_i\subseteq \{1,\dots, n\}$ of cardinality $m$ such that $\sum_{k \in S_i}x_k=\sum_{k\in S_j}x_k$ for each $i,j\in \{1,\dots,n/m\}$.

Berman and Karpinski \cite{DBLP:journals/eccc/ECCC-TR03-008} showed that it is NP-hard to approximate MAX-3-DM-E2 with ratio better than $\frac{98}{97}$.
This bound was later improved by Chleb\'ik and Chleb\'ikov\'a \cite{DBLP:journals/tcs/ChlebikC06} to $\frac{95}{94}$.
Finally, Chleb\'ik and Chleb\'ikov\'a \cite{DBLP:journals/jda/ChlebikC09} also note the following corollary of their $\frac{95}{94}$ bound.
\begin{theorem}[\cite{DBLP:journals/jda/ChlebikC09}]
\label{thm:matching}
	GAP($\alpha_0, \beta_0$)-3-DM-E2 is NP-hard, where $\alpha_0 = 0.9690082645$ and $\beta_0 = 0.979338843$.
\end{theorem}
For the skewed item case we note that the size of the items in the reduction from MAX-3-DM to 2-Dimensional Vector Bin Packing can be made smaller by going through $m$-Partition instead of 4-Partition.
Finally, for our Vector Bin Covering result we make a minor modification to the Vector Bin Packing reduction.

	\section{The Main Result}
	\label{sec:revised}
	In this section, we prove our main result, i.e., there is no APTAS for Vector Bin Packing.
	We do so by modifying the construction in the original proof given in \cite{DBLP:journals/ipl/Woeginger97} by adding a set of dummy vectors.
	The final analysis is based on the analysis in \cite{DBLP:journals/mor/BansalCKS06} for the Geometric Bin Packing lower bound.

	We start by defining a few integers based on the given MAX-3-DM instance $I_M$. Let $r=64q$, where $q=|X|=|Y|=|Z|$ and $b=r^4+15$. Define integers $x'_i,y'_i,z'_i$ corresponding to ${x_i\in X},{y_i\in Y}, {z_i\in Z}$ to be
	\begin{align*}
		x'_i=ir+1,\\
		y'_i=ir^2+2,\\
		z'_i=ir^3+4,
	\end{align*}
	and for each $t_{(i,j,k)}=(x_i,y_j,z_k)\in T$ define $t'_{(i,j,k)}$ as
	\[
		t'_{(i,j,k)}=r^4-kr^3-jr^2-ir+8.
	\]

	Let $U'$ be the set of integers constructed as above.
	Also, note that for any integer $a'\in U'$ we have $0<a'<b$.
	These integers were constructed so that the following statement holds.
	\begin{observation}[\cite{DBLP:journals/ipl/Woeginger97}]
	\label{obs:intcor}
		A set of four integers from $U'$ add up to $b$ if and only if they correspond to some elements $x_i\in X,y_j\in Y,z_k\in Z$ and tuple $t_{(i,j,k)}\in T$ where $t_{(i,j,k)}=(x_i,y_j,z_k)$.
	\end{observation}
	To obtain a Vector Bin Packing instance for each integer $a'\in U'$ construct a vector
	\[
		\mathbf a = \left(\frac{1}{5}+\frac{a'}{5b},\frac{3}{10}-\frac{a'}{5b}\right).
	\]
	We also construct additional $|T|+3q-4\beta(I_M)$ dummy vectors
	\[
		\mathbf d = \left(\frac{3}{5},\frac{3}{5}\right),
	\]
	where $\beta(\cdot)$ is a function from instances of 3-Dimensional Matching to positive integers.
	We now note a few properties of the vectors.
	\begin{observation}[\cite{DBLP:journals/ipl/Woeginger97, DBLP:journals/mor/BansalCKS06}]
	\label{obs:binsize}
		A bin can contain at most $4$ vectors.
		If a bin contains a dummy vector it can contain at most one more vector.
		Furthermore, a set of two vectors fit in a bin if and only if at least one of them is non-dummy.
	\end{observation}
	\begin{observation}[\cite{DBLP:journals/ipl/Woeginger97}]
	\label{obs:vectorcor}
		A set of four vectors fits in a bin if and only if it corresponds to a tuple.
	\end{observation}
	Now we show that the above construction is a gap reduction from MAX-3-DM to 2-Dimensional Vector Bin Packing (cf. Theorem 2.1 from \cite{DBLP:journals/mor/BansalCKS06}).
	\begin{lemma}
	\label{lem:main}
	If a MAX-3-DM instance $I_M$ has a solution with $\beta(I_M)$ tuples then the constructed Vector Bin Packing instance has a solution with $|T|+3q-3\beta(I_M)$ bins.
	Otherwise, if all the solutions of the MAX-3-DM instance have at most $\alpha(I_M)$ tuples then the constructed instance needs at least $|T|+3q-\frac{\alpha(I_M)}{3}-\frac{8\beta(I_M)}{3}$ bins where $\alpha(\cdot)$ is any function from instances of MAX-3-DM to positive integers.
	\end{lemma}
	\begin{proof}
		First, we show that if a MAX-3-DM instance has a matching consisting of $\beta(I_M)$ tuples, then the Vector Bin Packing instance has a solution of $|T|+3q-3\beta(I_M)$ bins.
		Using \Cref{obs:vectorcor}, the $4\beta(I_M)$ vectors corresponding to the $\beta(I_M)$ tuples and their elements can be packed into $\beta(I_M)$ bins.
		Each of the remaining $|T|+3q-4\beta(I_M)$ non-dummy vectors can be packed along with a dummy vector into $|T|+3q-4\beta(I_M)$ bins by \Cref{obs:binsize}.
	
		Now, suppose that for a given instance all the solutions have at most $\alpha(I_M)$ tuples.
		Let $n_g$ be the number of bins with 4 vectors, $n_d$ be the number of bins with dummy vectors, and $n_r$ be the rest of the bins.
		Since any solution to the Vector Bin Packing instance must pack all the non-dummy vectors we have 
		\begin{enumerate}[(a)]
			\item any bin containing four vectors consists of only non-dummy vectors by \Cref{obs:vectorcor};
			\item any bin containing a dummy vector contains at most one non-dummy vector, by \Cref{obs:binsize}; 
			\item any other bin can contain at most 3 vectors by \Cref{obs:binsize}.
		\end{enumerate}
		Therefore, we have
		\begin{align*}
			4n_g+3n_r+n_d&\geq 3q+|T|.
		\end{align*}
		Now, by \Cref{obs:binsize} we have $n_d=|T|+3q-4\beta(I_M)$.
		Hence, the above inequality simplifies to
		\begin{align*}
			4n_g+3n_r&\geq 4\beta(I_M)\\
			\Rightarrow n_g+n_r&\geq \frac{4}{3}\beta(I_M)-\frac{n_g}{3}\\
			\Rightarrow n_g+n_r+n_d&\geq |T|+3q-\frac{n_g}{3}-\frac{8}{3}\beta(I_M)
		\end{align*}
		where the last inequality follows from $n_d = |T|+3q-4\beta(I_M)$.
	
		Since there are at most $\alpha(I_M)$ tuples in the MAX-3-DM instance, by \Cref{obs:vectorcor} we have $n_g\leq \alpha(I_M)$.
		Therefore, the number of bins needed is at least $|T|+3q-\frac{\alpha(I_M)}{3}-\frac{8\beta(I_M)}{3}$.
	\end{proof}
	The following inapproximability for Vector Bin Packing directly follows from \Cref{lem:main}.
	\begin{theorem}
		\label{thm:vbp}
		There is no APTAS for the $d$-Dimensional Vector Bin Packing problem with $d\geq 2$ unless P=NP.
		Furthermore, for the $d=2$ case there is no algorithm with asymptotic approximation ratio better than $\frac{600}{599}$.
	\end{theorem}
	\begin{proof}
		Suppose that there is an algorithm with approximation ratio $1+\frac{\beta_0-\alpha_0}{15-9\beta_0}$.
		Then we can distinguish between MAX-3-DM-E2 instances (i) having a solution of $\lceil \beta_0 q \rceil$ tuples and (ii) having no solutions with more than $\lfloor \alpha_0 q \rfloor$ tuples using \Cref{lem:main} with $\alpha(I_M)=\lfloor \alpha_0 q \rfloor$ and $\beta(I_M)=\lceil \beta_0 q \rceil$, hence solving GAP($\alpha_0, \beta_0)$-3-DM-E2.
		By \Cref{thm:matching}, GAP($\alpha_0, \beta_0$)-3-DM-E2 is NP-hard, where $\beta_0=0.979338843$, and $\alpha_0=0.9690082645$.
		Hence, we obtain the bound of $1+\frac{\beta_0-\alpha_0}{15-9\beta_0}$. 
		Simple calculations will show this is at least $1+\frac{1}{599}$.
	\end{proof}

	\section{The Original Proof}
	\label{sec:original}
	The original proof uses essentially the same reduction as ours, i.e., there we had $r=32q$, $b=r^4+15$ and then for each $x_i\in X,y_i\in Y,z_i\in Z$ we had
	\begin{align*}
		x'_i=ir+1,\\
		y'_i=ir^2+2,\\
		z'_i=ir^3+4,
	\end{align*}
	and for $t_l\in T$ was $t'_l$ defined by
	\[
		t'_l=r^4-kr^3-jr^2-ir+8.
	\]
	And finally, to obtain a Vector Bin Packing instance for each integer $a'$ constructed above construct the vector
	\[
		\mathbf a = \left(\frac{1}{5}+\frac{a'}{5b},\frac{3}{10}-\frac{a'}{5b}\right).
	\]
	The above set of vectors forms a 2-Dimensional Vector Bin Packing instance $\mathbf U$.
	A noticeable difference from our reduction being the absence of dummy vectors.
	Also note that $r=32q$ and tuples are denoted by $t_l$.
	In \cite{DBLP:journals/ipl/Woeginger97}, Woeginger claimed that
	\begin{claim*}[Observation 4 in \cite{DBLP:journals/ipl/Woeginger97}]
	Any set of 3 vectors in $\mathbf U$ can be packed in a bin. No set of 5 vectors in $\mathbf U$ can be packed into a bin.
	\end{claim*}
	We show that this claim does not hold in general. In particular, all sets of 3 vectors can not be packed into a bin.

	Consider the tuple vectors for the tuples $t_1={(x_1,y_1,z_1)}$, $t_2={(x_2,y_1,z_1)}$, and $t_3={(x_3,y_1,z_1)}$.
	According to the claim, the vectors $\mathbf t_1,\mathbf t_2, \mathbf t_3$ corresponding to the above tuples can be packed in a bin.
	Suppose $\mathbf t_1,\mathbf t_2,\mathbf t_3$ can indeed be packed in a bin.
	This implies that the first components of the vectors do not exceed 1, i.e.,
	\[
	\frac{3}{5}+\frac{t'_1+t'_2+t'_3}{5b}\leq 1,
	\]
	which simplifies to
	\[
	t'_1+t'_2+t'_3\leq 2b.
	\]
	Finally, using 
	\begin{align*}
		t'_1=r^4-r^3-r^2-r+8,\\
		t'_2=r^4-r^3-r^2-2r+8,\\
		t'_3=r^4-r^3-r^2-3r+8,
	\end{align*}
	and
	\[
		b=r^4+15,
	\]
	along with further simplification we get
	\[
		r^4\leq 3r^3+3r^2+6r+6.
	\]
	But this inequality does not even hold for $r\geq 32$ whereas 32 is the smallest value for $r=32q$.
	Thus, the claim is incorrect.

	\section{Vector Bin Packing with skewed items}

	In this section, we adapt the reduction presented in~\Cref{sec:revised} to show that any algorithm for $\delta$-Skewed $d$-Dimensional Vector Bin Packing cannot have an approximation ratio better than $1+\epsilon$ if $\delta>20\sqrt \epsilon$ for small values of $\epsilon$.

	Again, we start by defining a few integers based on the given MAX-3-DM instance $I_M$.
	Let $m=\lceil \frac{2}{\delta} \rceil - 1$, for some 
	$\delta \in (0,\frac{2}{5})$.
	Choose $n>m2^m$ and set $r=nq$ and $b=r^m+2^{m+1}-1$.
	Define integers $x'_i,y'_i,z'_i$ corresponding to ${x_i\in X},{y_i\in Y}, {z_i\in Z}$ to be
	\begin{align*}
		x'_i=ir+1,\\
		y'_i=ir^2+2,\\
		z'_i=ir^3+4,
	\end{align*}
	and for each $t_{(i,j,k)}=(x_i,y_j,z_k)\in T$ define $t'_{(i,j,k)}$ as
	\[
		t'_{(i,j,k)}=r^m-\sum_{l=4}^{m-1}r^l-kr^3-jr^2-ir+2^m.
	\]
	Finally, we add additional $|T|$ integers for each $l\in \{4,\dots,m-1\}$,
	\[
		c'_l=r^l+2^l.
	\]

	Let $U'$ be the set of integers constructed as above.
	As before, for any integer $a'\in U'$ we have $0<a'<b$ and the following statement holds.
	\begin{observation}
	\label{obs:skewintcor}
		A subset of integers $S\subseteq U'$ with $|S|=m$ adds up to $b$ if and only if there are $x'_i,y'_j,z'_k,t'_{(i,j,k)}\in S$ corresponding to some elements $x_i\in X,y_j\in Y,z_k\in Z$ and tuple $t_{(i,j,k)}\in T$ where $t_{(i,j,k)}=(x_i,y_j,z_k)$ and $c'_l\in S$ for each $l\in \{4,\dots,m-1\}$.
	\end{observation}
	To obtain a Vector Bin Packing instance for each integer $a'\in U'$, construct the vector
	\[
		\mathbf a = \left(\frac{1}{m+1}+\frac{a'}{(m+1)b},\frac{m+2}{m(m+1)}-\frac{a'}{(m+1)b}\right).
	\]
	We also construct additional $(m-3)|T|+3q-m\beta(I_M)$ dummy vectors 
	\[
		\mathbf d = \left(\frac{m-1}{m+1},0\right),
	\]
	where $\beta(\cdot)$ is again a function from instances of MAX-3-DM to positive integers which will be fixed later.
	Notice that each of these vectors has a dimension whose size is less than $\frac{2}{m+1}\leq \delta$.
	Again, we note a few properties of the vectors.
	\begin{observation}
	\label{obs:skewbinsize}
		A bin can contain at most $m$ vectors.
		If a bin contains a dummy vector it can contain at most one more vector.
		Furthermore, a set of two vectors fit in a bin if and only if at least one of them is non-dummy.
	\end{observation}
	\begin{observation}
	\label{obs:skewvectorcor}
		A set $S$ of $m$ vectors fits in a bin if and only if there are $\mathbf x_i,\mathbf y_j,\mathbf z_k,\mathbf t_{(i,j,k)}\in S$ corresponding to some elements $x_i\in X,y_j\in Y,z_k\in Z$ and tuple $t_{(i,j,k)}\in T$ where $t_{(i,j,k)}=(x_i,y_j,z_k)$ and $\mathbf c_l\in S$ for each $l\in \{4,\dots,m-1\}$.
	\end{observation}
	Now we show that the above construction is also a gap reduction from MAX-3-DM to 2-Dimensional Vector Bin Packing.
	\begin{lemma}
	\label{lem:skewmain}
		If a MAX-3-DM instance $I_M$ has a solution with $\beta(I_M)$ tuples then the constructed Vector Bin Packing instance has a solution with $(m-3)|T|+3q-(m-1)\beta(I_M)$ bins.
		Otherwise, if all the solutions of the MAX-3-DM instance have at most $\alpha(I_M)$ tuples then the constructed instance needs at least $(m-3)|T|+3q-\frac{\alpha(I_M)}{m-1}-\frac{m(m-2)\beta(I_M)}{m-1}$ bins where $\alpha(\cdot)$ is a function from MAX-3-DM instances to positive integers.
	\end{lemma}
	\begin{proof}
		First, we show that if a MAX-3-DM instance has a matching consisting of $\beta(I_M)$ tuples, then the Vector Bin Packing instance has a solution of $(m-3)|T|+3q-(m-1)\beta(I_M)$ bins.
		Using \Cref{obs:skewvectorcor}, the $m\beta(I_M)$ vectors corresponding to the $\beta(I_M)$ tuples and their elements and one vector $\mathbf c_l$ for each $l\in \{4,\dots,m-1\}$ can be packed into $\beta(I_M)$ bins.
		As in \Cref{lem:main}, by \Cref{obs:skewbinsize} we can pack the remaining vectors in $(m-3)|T|+3q-m\beta(I_M)$ bins.
	
		Now, suppose that for a given instance, all the solutions have at most $\alpha(I_M)$ tuples.
		Let $n_g$ be the number of bins with $m$ vectors, $n_d$ be the number of bins with dummy vectors, and $n_r$ be the rest of the bins.
		Now, since any solution to the bin packing instance must cover all the non-dummy vectors we have 
		\begin{enumerate}[(a)]
			\item any bin containing $m$ vectors consists of only non-dummy vectors by \Cref{obs:skewvectorcor};
			\item any bin containing a dummy vector contains at most one non-dummy vector, by \Cref{obs:skewbinsize};
			\item any other bin can contain at most $m-1$ vectors by \Cref{obs:skewbinsize}.
		\end{enumerate}
		Therefore, we have
		\begin{align*}
			mn_g+(m-1)n_r+n_d&\geq 3q+(m-3)|T|.
		\end{align*}
		Again, as in \Cref{lem:main}, we can simplify the above inequality using the facts: $n_d=(m-3)|T|+3q-m\beta(I_M)$ (by \Cref{obs:skewbinsize}); $n_g \leq \alpha(I_M)$ (by \Cref{obs:skewvectorcor}).
		Hence, we can conclude that the number of bins needed is at least
		\begin{align*}
			&&(m-3)|T|+3q-\frac{\alpha(I_M)}{m-1}-\frac{m(m-2)\beta(I_M)}{m-1}.
		\end{align*}
		\end{proof}
		\begin{theorem}
			For any $\epsilon\in (0,\frac{1}{2500})$ there is no $1+\epsilon$-approximation algorithm for the $\delta$-Skewed $d$-Dimensional Vector Bin Packing problem with $d\geq 2, \delta> 20\sqrt \epsilon$ unless P=NP.
		\end{theorem}
		\begin{proof}
			Using \Cref{thm:matching} and \Cref{lem:skewmain} along with arguments used in the proof of \Cref{thm:vbp}, we obtain the bound of $1+\frac{\beta_0-\alpha_0}{m(2m-3)-(m-1)^2\beta_0}$, where $\alpha_0,\beta_0$ are the parameters from \Cref{thm:matching}.
			Simple calculations will show that this is strictly greater than $1+\frac{\delta^2}{400}$.
			Using $\delta=20\sqrt{\epsilon}$ we get the desired result.
		\end{proof}

	\section{Vector Bin Covering has no APTAS}
\label{sec:vbc}
In this section, we prove that Vector Bin Covering has no APTAS unless P=NP by adapting the proof presented in \Cref{sec:revised}.
The analysis is slightly more complicated and bears some resemblance to the analysis of the reduction to the Geometric Bin Covering problem presented in~\cite{DBLP:journals/jda/ChlebikC09}.
Again, we obtain a gap preserving reduction from MAX-3-DM to 2-Dimensional Vector Bin Covering.
We start with the same set of integers $U'$ we had in \Cref{sec:revised}.
To obtain a Vector Bin Covering instance for each integer $a'$ in $U'$, construct the vector
\[
	\mathbf a = \left(\frac{1}{5}+\frac{a'}{5b},\frac{3}{10}-\frac{a'}{5b}\right).
\]
We also construct additional $|T|+3q-4\beta(I_M)$ dummy vectors 
\[
	\mathbf d = \left(\frac{9}{10},\frac{9}{10}\right),
\]
where $\beta(\cdot)$ is a function from instances of 3-Dimensional Matching to positive integers (note that the size of dummy vectors is different from \Cref{sec:revised}).
If a bin has at least one dummy vector then we call it a D-bin.
Otherwise, if a bin has no dummy vectors the we call it a non-D-bin.
Again, we note a few properties of the constructed vectors.
\begin{observation}
	\label{obs:coversize}
	Any set of 5 vectors can cover a bin.
	Any vector along with a dummy vector can cover a bin.
	At least 2 vectors are needed to form a bin.
\end{observation}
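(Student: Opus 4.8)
The plan is to verify all three assertions by componentwise bookkeeping, exactly as in the proof of \Cref{lem:binsize}. Recall that a non-dummy vector has first coordinate $\frac15+\frac{a'}{5b}\in\left(\frac15,\frac25\right)$ and second coordinate $\frac{3}{10}-\frac{a'}{5b}\in\left(\frac1{10},\frac3{10}\right)$ (using $0<a'<b$), whereas the dummy vector has both coordinates equal to $\frac9{10}$; and a set $B$ is a unit cover precisely when $\sum_{v\in B}v\ge\mathbf 1$ holds componentwise. Hence for each assertion it suffices to lower bound the two coordinate sums of the set in question.

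Two of the parts are routine. For ``at least two vectors are needed to form a unit cover'', note that every vector of the instance has first coordinate at most $\frac9{10}<1$, so a single vector can never dominate $\mathbf 1$. For ``a vector together with a dummy vector covers a bin'', add $\mathbf d=\left(\frac9{10},\frac9{10}\right)$ to an arbitrary instance vector $\mathbf a$: if $\mathbf a$ is itself a dummy the sum is $\left(\frac95,\frac95\right)\ge\mathbf 1$, and otherwise the two coordinate sums are $\frac9{10}+\frac15+\frac{a'}{5b}>1$ and $\frac9{10}+\frac3{10}-\frac{a'}{5b}>\frac9{10}+\frac3{10}-\frac15=1$, the last inequality using $\frac{a'}{5b}<\frac15$.

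The part about five vectors is the one that needs care. The first coordinate is never the obstruction: each vector contributes more than $\frac15$ there, so five of them sum to more than $1$. Everything therefore hinges on the second coordinate. If the five chosen vectors are all non-dummy, with associated integers $a'_1,\dots,a'_5$, their second coordinates sum to $\frac32-\frac1{5b}\sum_{i=1}^5 a'_i$, so the claim for this case is exactly the arithmetic statement $\sum_{i=1}^5 a'_i\le\frac52 b$; and if at least one of the five is a dummy, the extra $\frac9{10}$ it contributes to the second coordinate makes the inequality immediate. I expect this arithmetic bound to be the main obstacle, and it is where I would scrutinise the construction most carefully — each tuple integer $t'_{(i,j,k)}=r^4-kr^3-jr^2-ir+8$ lies only $O(r^3)$ below $b=r^4+15$, so five tuple vectors push $\sum_i a'_i$ close to $5b$, well above $\frac52 b$, and the coarse bound $a'<b$ is useless here. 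So I would expect the five-vector claim to be usable only for configurations satisfying $\sum_i a'_i\le\frac52 b$ (equivalently, not too many of the five coming from the tuple family), and isolating exactly which restricted form of the statement is both correct and still strong enough for the ensuing gap argument is, in my view, the crux — a subtlety of precisely the kind flagged in \Cref{sec:original}.
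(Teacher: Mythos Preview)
Your treatment of the second and third assertions is correct and complete. More importantly, your scrutiny of the first assertion is on target: the claim ``any set of five vectors covers a bin'' is \emph{false} precisely for the reason you identify. Five tuple vectors have associated integers each exceeding $r^4-qr^3-qr^2-qr>\tfrac{63}{64}r^4-O(r^3)$ (recall $q=r/64$), so their sum comfortably exceeds $\tfrac52 b=\tfrac52(r^4+15)$ once $q$ is not tiny, and the second coordinate of the vector sum then falls below~$1$. The paper offers no proof of this observation, so there is nothing to compare your argument against; you have simply uncovered a genuine misstatement, of exactly the same flavour as the one dissected in \Cref{sec:original}.

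The good news is that the downstream argument in \Cref{lem:gapvbc} survives with a weaker (and true) replacement. Three non-dummy vectors have second-coordinate sum $\tfrac{9}{10}-\tfrac{1}{5b}\sum a'_i<\tfrac{9}{10}<1$, so every minimal non-D cover uses at least four non-dummy vectors. Redefining $n_r$ to be the number of minimal non-D covers of size \emph{at least} five (rather than exactly five), the inequality $n_d+4n_g+5n_r\le 3q+|T|$ still follows from counting non-dummy vectors, and the remainder of the proof goes through verbatim. So the correct form of the observation --- ``no set of three non-dummy vectors covers a bin'' together with the two parts you established --- is both true and sufficient; your instinct about which restricted statement is actually needed was right.
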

\begin{observation}
	\label{obs:vectorcor2}
	A set of four vectors covers a non-D-bin if and only if it corresponds to a tuple.
\end{observation}
Now we are ready to prove our main lemma showing our reduction is indeed a gap preserving reduction.
\begin{lemma}
	\label{lem:gapvbc}
	If a MAX-3-DM instance $I_M$ has a solution with $\beta(I_M)$ tuples then there is a solution to the Vector Bin Covering instance with $|T|+3q-3\beta(I_M)$ tuples.
	Otherwise, if all the solutions of $I_M$ have at most $\alpha(I_M)$ tuples then the constructed instance can cover at most $|T|+3q-\frac{16}{5}\beta(I_M)+\frac{\alpha(I_M)}{5}$ bins, where $\alpha(\cdot)$ is a function from MAX-3-DM instances to positive integers.
\end{lemma}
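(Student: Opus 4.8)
The plan is to prove the two directions separately: the completeness direction is a direct covering construction, and the soundness direction is a counting argument in the style of the bin-covering lower bound of Chleb\'ik and Chleb\'ikov\'a. Throughout I write $M=|T|+3q$ for the number of non-dummy vectors and $D=M-4\beta(I_M)$ for the number of dummy vectors.

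For completeness, suppose $I_M$ has a matching of $\beta(I_M)$ tuples. For each such tuple $t_{(i,j,k)}=(x_i,y_j,z_k)$ I put $\mathbf x_i,\mathbf y_j,\mathbf z_k,\mathbf t_{(i,j,k)}$ into one bin; by the ``if'' part of \Cref{lem:vectorcor2} this bin sums to exactly $(1,1)$ and hence is covered. These $\beta(I_M)$ bins use $4\beta(I_M)$ non-dummy vectors, leaving exactly $M-4\beta(I_M)=D$ non-dummy vectors, which equals the number of dummy vectors; pairing each leftover non-dummy vector with its own dummy vector gives $D$ more covered bins by \Cref{obs:coversize}. This yields $\beta(I_M)+D=|T|+3q-3\beta(I_M)$ covered bins.

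For soundness, assume $\text{OPT}(I_M)\le\alpha(I_M)$ and fix any family of disjoint covered bins. Since every non-dummy vector has its two coordinates summing to exactly $\tfrac12$, a bin containing no dummy vector must contain at least $4$ vectors for both coordinate sums to reach $1$; call such a bin with exactly $4$ vectors a \emph{tuple bin} (it corresponds to a tuple by \Cref{lem:vectorcor2}) and one with at least $5$ vectors a \emph{big bin}. Let $g$, $h$, $f$ be the numbers of tuple bins, big bins, and bins containing a dummy vector, so the number of bins is $g+h+f$. The vector sets of distinct tuple bins are disjoint, so the corresponding tuples form a matching of $I_M$ and $g\le\text{OPT}(I_M)\le\alpha(I_M)$. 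Counting non-dummy vectors, and letting $K$ be the number of non-dummy vectors lying in dummy-containing bins, we get $4g+5h+K\le M$. Now split the dummy-containing bins into the $f_0$ of them that contain only dummy vectors (hence at least two, since no single vector covers a bin) and the $f_1$ that contain at least one non-dummy vector; then $f=f_0+f_1$, $K\ge f_1$, and counting dummy vectors gives $2f_0+f_1\le D$. Therefore $5f-K\le 5f_0+4f_1\le 4(2f_0+f_1)\le 4D$, and
\[
5(g+h+f)\le 5g+(M-4g-K)+5f=g+M+(5f-K)\le\alpha(I_M)+M+4D=5\bigl(|T|+3q\bigr)-16\beta(I_M)+\alpha(I_M),
\]
so dividing by $5$ bounds the number of covered bins by $|T|+3q-\tfrac{16}{5}\beta(I_M)+\tfrac15\alpha(I_M)$.

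The crux is getting the soundness bound tight. A naive argument using only coordinate sums (a bin with $d$ dummy and $k$ non-dummy vectors needs $18d+5k\ge 20$) is far too weak, because it ignores that one dummy vector together with one non-dummy vector already covers a bin; the adversary's optimal behaviour is to spend each dummy vector on a private two-vector bin and then cram the remaining $4\beta(I_M)$ non-dummy vectors into tuple bins (at most $\alpha(I_M)$ of these) and big bins. The $f_0$-versus-$f_1$ split, combined with the separate dummy- and non-dummy-counting inequalities, is exactly what rules out anything better and produces the term that separates $\alpha(I_M)$ from $\beta(I_M)$.
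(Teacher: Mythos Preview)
Your proof is correct. The completeness direction matches the paper exactly. For soundness you take a genuinely different route: the paper first \emph{normalizes} an optimal covering (rearranging vectors so that every dummy sits in its own two-vector bin and no cover has a redundant vector), which forces $n_d=D$ and reduces the count to the simple inequality $n_d+4n_g+5n_r\le M$. You instead work directly with an arbitrary covering, splitting the dummy-containing bins into those with only dummies ($f_0$) and those with at least one non-dummy ($f_1$), and combining the two separate counting inequalities $2f_0+f_1\le D$ and $K\ge f_1$ via the chain $5f-K\le 5f_0+4f_1\le 4(2f_0+f_1)\le 4D$. This buys you a cleaner argument that avoids the paper's somewhat informal rearrangement step (in particular the claim ``clearly, more than $|T|+3q-4\beta(I_M)$ bins are covered in an optimal solution'' and the case analysis for bins with $k$ dummies), at the cost of a slightly more delicate inequality manipulation. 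Both approaches hinge on the same two facts---\Cref{lem:vectorcor2} forcing $g\le\alpha(I_M)$ and the coordinate-sum observation forcing every non-D bin to have at least four vectors---and arrive at the identical bound.
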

\begin{proof}
	Suppose that $\text{OPT}(I_M)\geq \beta(I_M)$.
	As in \Cref{lem:main}, we use the optimal solution and \Cref{obs:vectorcor2} to cover $\beta(I_M)$ bins using $4\beta(I_M)$ non-dummy vectors
	while $|T|+3q-4\beta(I_M)$ bins are covered with the remaining vectors using \Cref{obs:coversize}.

	Now, suppose that every solution of the MAX-3-DM instance has value at most $\alpha(I_M)$.
	Consider an optimal solution to the constructed Vector Bin Covering instance.
	We can normalize an optimal solution without any loss in the number of bins covered as follows.
	\begin{enumerate}[(a)]
		\item
			\emph{Number of dummy vectors equals the number of D-bins.}
			To that end, observe that there are $|T|+3q-4\beta(I_M)$ dummy vectors and $|T|+3q$ non-dummy vectors.
			Therefore, by \Cref{obs:coversize} there is a solution with $|T|+3q-4\beta(I_M)$ bins.
			Hence, an optimal solution must have at least $|T|+3q-4\beta(I_M)$ bins.
			Suppose there is a bin (say $B_1$) with at least two dummy vectors, i.e., number of dummy vectors $>$ number of D-bins.
			Then we can show there is another optimal solution with a larger number of D-bins.
			As there are at least $|T|+3q-4\beta(I_m)-1$ bins which still need to be covered and $|T|+3q-4\beta(I_m)-2$ dummy vectors remaining, there must be at least one non-D-bin (say $B_2$).
			Now, note that by \Cref{obs:coversize} $B_2$ must contain at least 2 vectors.
			Again by \Cref{obs:coversize}, we can now exchange one vector from $B_2$ with a dummy vector in $B_1$ to obtain a another solution with same number of bins while increasing the number of D-bins.
		\item 
			\emph{No subset of a bin can cover a bin.}
			To that end, some non-dummy vectors can be left out, i.e., they may be designated as not covering any bin.
			For non-D-bin we just choose any minimal subset of the vectors that can cover the bin.
			Now, if the solution satisfies condition (a) then by \Cref{obs:coversize} each D-bin contains one dummy vector and at least one non-dummy vectors; hence, we can keep the dummy vector and one non-dummy vector.
	\end{enumerate}
	Let $n_d$ be the number of D-bins, $n_g$ be the number of non-D-bins covered by 4 vectors and $n_r$ be the number of non-D-bin covered by 5 vectors.
	Again, as in \Cref{lem:main}, we use the facts: $n_d=|T|+3q-4\beta(I_M)$ (due to our normalization); $3q+|T|$ non-dummy vectors; $n_g\leq \alpha(I_M)$ (by \Cref{obs:vectorcor2}) to get
	\begin{align*}
		n_d+n_g+n_r&\leq |T|+ 3q-\frac{16}{5}\beta(I_M)+\frac{\alpha(I_M)}{5}.
	\end{align*}
	In other words, the number of bins covered is at most
	\begin{align*}
		&&|T|+3q-\frac{16}{5}\beta(I_M)+\frac{\alpha(I_M)}{5}.
	\end{align*}
\end{proof}
\begin{theorem}
	There is no APTAS for $d$-Dimensional Vector Bin Covering with $d\geq 2$ unless P=NP.
	Furthermore, for the 2-Dimensional Vector Bin Covering there is no algorithm with asymptotic approximation ratio better than $\frac{998}{997}$.
\end{theorem}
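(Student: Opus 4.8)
The plan is to mirror the inapproximability proof for vector bin packing, feeding the gap reduction of \Cref{lem:gapvbc} into the NP-hardness of the promise version of MAX-3-DM-E2 from \Cref{thm:matching}. Since vector bin covering is a maximization problem, the relevant contrast is between ``many bins covered'' in the completeness case and ``few bins covered'' in the soundness case, so compared with the bin packing argument all the inequalities will be reversed.

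First I would apply \Cref{lem:gapvbc} to a MAX-3-DM-E2 instance $I_M$ with $|X|=|Y|=|Z|=q$, choosing $\beta(I_M)=\lceil\beta_0 q\rceil$ and $\alpha(I_M)=\lfloor\alpha_0 q\rfloor$, and using that $|T|=2q$ (double-counting incidences, $3|T|=2\cdot 3q$, since every element lies in exactly two triples), so that $|T|+3q=5q$. If $\mathrm{OPT}(I_M)\geq\lceil\beta_0 q\rceil$ then the constructed covering instance has a solution covering at least $5q-3\lceil\beta_0 q\rceil$ bins, whereas if $\mathrm{OPT}(I_M)\leq\lfloor\alpha_0 q\rfloor$ then no solution covers more than $5q-\frac{16}{5}\lceil\beta_0 q\rceil+\frac{1}{5}\lfloor\alpha_0 q\rfloor$ bins.

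Second I would take the ratio of these two bounds. Dividing by $q$ and letting $q\to\infty$, so that the roundings become negligible exactly as in the bin packing proof, the ratio tends to
\[
\frac{5-3\beta_0}{5-\frac{16}{5}\beta_0+\frac{1}{5}\alpha_0}=\frac{25-15\beta_0}{25-16\beta_0+\alpha_0}=1+\frac{\beta_0-\alpha_0}{25-16\beta_0+\alpha_0},
\]
and a short computation with $\beta_0=0.979338843$ and $\alpha_0=0.9690082645$ shows this is at least $\frac{998}{997}$. Consequently, a polynomial-time algorithm for $2$-dimensional vector bin covering with asymptotic approximation ratio strictly below $\frac{998}{997}$ would, for all sufficiently large $q$, output on a completeness instance a cover of size exceeding the soundness bound, hence decide the NP-hard promise problem of \Cref{thm:matching}; this yields the claimed $\frac{998}{997}$ bound and in particular rules out an APTAS. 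For $d>2$ I would pad every vector (dummy and non-dummy) with $d-2$ additional coordinates all equal to $1$; since any nonempty bin automatically covers these coordinates and every unit cover contains at least two vectors by \Cref{obs:coversize}, the statements \Cref{obs:coversize}, \Cref{lem:vectorcor2} and \Cref{lem:gapvbc} carry over verbatim, so there is no APTAS for every $d\geq 2$.

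The only delicate point is the arithmetic of the second step: the margin is very thin (the gap constant is $1.001003\ldots$), so the coefficients $3$, $\frac{16}{5}$, $\frac{1}{5}$ appearing in \Cref{lem:gapvbc} and the exact rational values of $\alpha_0,\beta_0$ must be tracked carefully. The floors, ceilings and the $\limsup$ in the definition of $R_\infty$ cause no difficulty: the ratio of the completeness bound to the soundness bound converges to $\frac{998}{997}$ as $q\to\infty$, which is all that is needed to defeat any algorithm whose asymptotic ratio is strictly below $\frac{998}{997}$. Everything else is a straightforward transcription of the bin packing argument.
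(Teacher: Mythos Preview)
Your proposal is correct and follows essentially the same approach as the paper: instantiate \Cref{lem:gapvbc} with $\alpha(I_M)=\lfloor\alpha_0 q\rfloor$ and $\beta(I_M)=\lceil\beta_0 q\rceil$, invoke \Cref{thm:matching}, and compute the resulting gap ratio $1+\frac{\beta_0-\alpha_0}{25-16\beta_0+\alpha_0}\geq\frac{998}{997}$. Your explicit use of $|T|=2q$ (hence $|T|+3q=5q$) and your padding argument for $d>2$ are details the paper leaves implicit, but the core argument is identical.
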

\begin{proof}
	Using \Cref{thm:matching} and \Cref{lem:gapvbc} along with arguments used in the proof of \Cref{thm:vbp}, we obtain the bound of $1+\frac{\beta_0-\alpha_0}{25-16\beta_0+\alpha_0}$, where $\alpha_0,\beta_0$ are the parameters from \Cref{thm:matching}.
	Simple calculations will show this is at least $1+\frac{1}{997}$.
\end{proof}

	\section*{Concluding Remarks}
\begin{enumerate}
	\item The definition of the class APX-hard is quite technical (see~\cite{DBLP:conf/coco/Crescenzi97}) and our reduction does not show Vector Bin Packing is in APX-hard (despite ruling out an asymptotic PTAS).
		The same is also true for the Vector Bin Covering reduction.
	\item There is still a considerable gap between the best-known approximation ratio of $\frac{4}{3}+\varepsilon$ for the 2-Dimensional Vector Bin Packing problem and our lower bound for it.
	Similarly, there is a large gap in case of the 2-Dimensional Vector Bin Covering problem.
\end{enumerate}

	\section*{Acknowledgements}
	I would like to thank Arindam Khan for introducing me to multidimensional bin packing problems and all the discussions we had on them.
I would also like to thank KVN Sreenivas, Rameesh Paul, Eklavya Sharma, and anonymous referees for their valuable comments.
	\bibliography{references}
	\appendix
\end{document}